\def\BibTeX{{\rm B\kern-.05em{\sc i\kern-.025em b}\kern-.08em
    T\kern-.1667em\lower.7ex\hbox{E}\kern-.125emX}}
\renewcommand*\env@matrix[1][*\c@MaxMatrixCols c]{%
  \hskip -\arraycolsep
  \let\@ifnextchar\new@ifnextchar
  \array{#1}}
\newcommand{\sw}{\text{sw}}
\newcommand{\dc}{\text{dc}}
\newcommand{\ty}{\infty}
\newcommand{\mc}[1]{\mathcal{#1}}
\newcommand{\mb}[1]{\mathbb{#1}}
\newcommand{\Ra}{\;\;\Rightarrow\;\;}
\newcommand{\ra}{\rightarrow}
\newcommand{\txt}[1]{\text{#1}}
\newcommand{\mat}[1]{\begin{matrix}#1\end{matrix}} 
\newcommand{\pa}[1]{\left(#1\right)} 
\newcommand{\br}[1]{\left[#1\right]} 
\newcommand{\pmt}[1]{\pa{\mat{#1}}} 
\newcommand{\q}{\quad}
\newcommand{\s}{&}
\newcommand{\mbf}[1]{\mathbf{#1}}
\newcommand{\teq}{\triangleq}
\newtheorem{proposition}{Proposition}
\title{ \LARGE \bf Nonlinear Model Predictive Control of Permanent Magnet Synchronous Generators in DC Microgrids}
\author{Luis Herrera, Chad Miller, and Bang-Hung Tsao
	\thanks{Luis Herrera is with the Department of Electrical Engineering, University at Buffalo, Buffalo, NY. Chad Miller is with the Air Force Research Laboratory, WPAFB, OH. Bang-Hung Tsao is with the University of Dayton Research Institute, Dayton, OH. Corresponding author: L. Herrera {\tt\small lcherrer@buffalo.edu}}%
}
\begin{document}



\maketitle
\thispagestyle{empty}
\pagestyle{empty}

\begin{abstract}
A new strategy is proposed to control interior permanent magnet generators in  dc microgrids interfaced through an active rectifier. The controller design is based on the decomposition of the system dynamics into slow and fast modes using singular perturbation theory. An inner current controller is developed based on output regulation techniques and an outer voltage controller is proposed using  Nonlinear Model Predictive Control (NMPC). The NMPC regulates the dc bus voltage and  minimizes the ac side losses. Simulation results are then presented based on realistic conditions for aircraft power systems.


\end{abstract}


\section{Introduction}

Electric machines play a fundamental role in the development of dc microgrids, with applications in the transportation industry. In electric vehicles, Permanent Magnet Synchronous Machines (PMSM) are a popular choice for the primary motor/generator \cite{Chau}. In the More Electric Aircraft (MEA), these machines can be used for generation  and motoring applications (e.g. actuators, propellers, etc.) \cite{Giangrande, Gao1}. 

Control techniques for PMSMs used in motor drives generally ensure their optimal operation (in terms of efficiency) using techniques such as Maximum Torque per Amp (MTPA) and Maximum Torque per Volt (MTPV) \cite{nam2018ac}. These optimal conditions are relatively straightforward to implement in Surface Mounted PMSM (SPMSM), since the torque production only involves the permanent magnet and the q-axis current. Therefore, for a SPMSM in motoring mode, the q-axis current is  used primarily to track a certain speed or torque reference during normal operation. In generator mode, this same current can be used to regulate the dc bus voltage \cite{Fan2018}.

However, the controllers  of Interior PMSMs (IPMSM) based motors/generators do not always operate optimally. The main reason is that IPMSM machines can produce torque through both its permanent magnets and through the reluctance torque, due to the saliency of the rotor. Since the latter utilizes both d and q axis currents, when the same strategy as SPMSM is used for IPMSM, the reluctance torque is not optimally used and the generator/motor is operated at a lower power factor (increasing ac side losses). For example, in \cite{Gao1, Fan2018, Dehghani, Miao, Gao2016,  Bozhko2017, Clements2009,tripathi2016optimum}, the q axis current is used to control the dc bus voltage (generator) and speed/torque (motoring), irrespective of the type of machine used (SPMSM or IPMSM).

Most controllers for dc/ac and dc/dc converters employ a two loop strategy: inner current control and outer voltage or speed/torque control \cite{wang2014modeling, HerreraNCS, vasquez2012modeling}, and their stability analysis is typically presented using linearization  techniques such as root locus \cite{pogaku2007modeling}. This particular control structure owes its development to the nature of the physical system, composed of both fast and slow states. However, the (nonlinear) stability analysis and controller design for these types of controllers exploiting these fast/slow time constants has not been conducted. Nevertheless, singular perturbation techniques have been employed for power electronics and motor drives \cite{kimball2008singular, umbria2014three,kokotovic1999singular}. However, this type of control design does not generally follow an inner/outer loops and instead uses a composite control, i.e. a summation of two terms: the slow and fast components.

In this paper, we analyze the dynamics of PMSMs based generators for dc power systems using singular perturbation techniques and develop a  controller which maintains the  existing inner/outer loop control structure typically used in power electronics. In Section II, the overall generator dynamics with an active rectifier is presented along with an overview of the control procedure. In Section III, the inner current controller is developed using output regulation theory to track the desired reference. In Section IV, the outer controller is proposed using Nonlinear Model Predictive Control (NMPC) to achieve both voltage regulation and optimal operation of the machine. Simulation results are presented in Section V based on a BMW i3 IPMSMs in rectification mode (generator). Lastly, conclusion and future work are discussed.

The following notation is used throughout this paper.    For a general matrix $M\in\mb{R}^{n\times m}$, its $(i,\;j)$ element is denoted as $M_{(i,j)}$.  The set of complex numbers with negative real part is denoted as $C^-$.

\begin{figure}[!b]
	\begin{center}
		\includegraphics[width=0.4\textwidth]{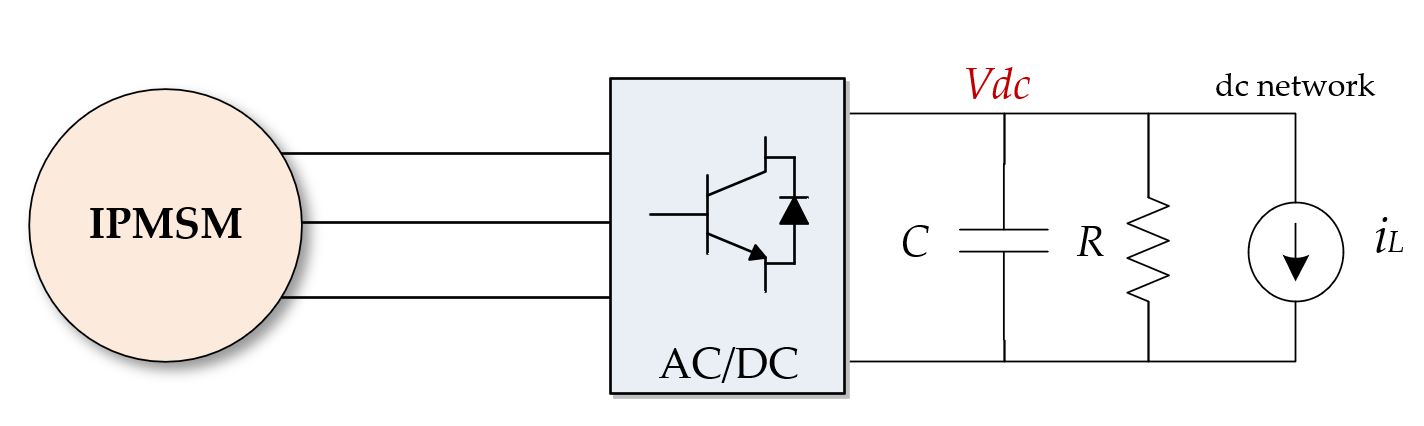}\vspace{0pt}
		\caption{PMSM based generator for dc microgrids.} 
		\label{fig:genair} 
	\end{center}
\end{figure}

\newcommand{\meas}{0.14}

\renewcommand{\arraystretch}{1.3}

\section{PMSM Based Generator}
An overview of a PMSM generator (PMSG) for dc microgrids is shown in Fig. \ref{fig:genair}. The overall dynamics are composed of the ac side (PMSM) and dc side (capacitor) system. These systems can effectively be decomposed into fast and slow modes.

\subsection{System Dynamics} 
The  dynamics of a PMSG with an active rectifier in rotor reference frame are given as follows \cite{nam2018ac, Gao2016}:
\renewcommand{\arraystretch}{1.75}
\begin{align}\label{eq:dcside}
&\txt{(dc)}\left\{\begin{array}{ll}
\dot{v}_{dc} \s= -\frac{1}{RC}v_{dc} + \frac{3}{2C}\frac{1}{v_{dc}}\pa{v_di_d+v_qi_q} -\frac{1}{C}i_L
\end{array}\right. \\\label{eq:acside}
&\txt{(ac)}\left\{\begin{array}{ll}
\dot{i}_d &= \frac{-R_s}{L_d}i_d+\omega_r \frac{L_q}{L_d}i_q + \frac{1}{L_d}v_d \\
\dot{i}_q &= \frac{-R_s}{L_q}i_q -\omega_r\frac{L_d}{L_q}i_d -\frac{\omega_r}{L_q}\lambda_m + \frac{1}{L_q}v_q
\end{array}\right.
\end{align}
where $i_{d}$ and $i_{q}$ are the d and q axis current respectively, $L_d$ and $L_q$ are the inductances in the respective axis, $R_s$ is the stator resistance, $\omega_r$ is the rotor electrical frequency, $\lambda_m$ is the permanent magnet flux linkage, $C$ is the dc side capacitance, $R$ is the parallel dc side resistance, and $i_L$ is the dc side load current. The overall system was derived using the standard dq-transformation shown in the appendix.

The inputs to \eqref{eq:dcside}-\eqref{eq:acside}, $v_d$ and $v_q$, can be written in terms of the modulation indices $d_d,\;d_q\in\br{-1,\;1}$:
\begin{align}\label{eq:modinx}
\begin{array}{ll}
v_d &= d_d\frac{v_{dc}}{2}\\
v_q &= d_q\frac{v_{dc}}{2}
\end{array}
\end{align} 
based on sine PWM, coupling the dc voltage to the ac currents.

The system model, \eqref{eq:dcside}-\eqref{eq:acside}, can then be decomposed into fast and slow modes and written using singular perturbation theory as follows \cite{kokotovic1999singular}:
\renewcommand{\arraystretch}{1.45}
\begin{align}
\label{eq:SPslow}
\dot{x} &= f(x,\;z,\;u,\;d)\\ \label{eq:SPfast}
\mu\dot{z} &= g(x,\;z,\;u, \; d)
\end{align}
where $0<\mu\ll1$, $x\teq v_C$  and $z\teq\pa{i_d,\;i_q}^T$ are the slow and fast modes respectively, $u = \pa{d_d,\;d_q}^T$ are the inputs, and  $d\teq i_L$ is the disturbance.

\subsection{Overall Controller Design}
The goal of a PMSG controller is  to regulate the dc bus voltage. Typical controller design using singular perturbation theory decomposes the inputs into slow and fast components as $u=u_s+u_f$, generally known as composite control \cite{kokotovic1999singular}. However, following existing approaches for control of electric machines \cite{Gao2016} and power electronics \cite{vasquez2012modeling}, the controller will be developed as follows:
\begin{itemize}
	\item The fast modes are regulated through $u$ to follow a desired reference, i.e. $z\ra z^*$
	\item The slow modes are controlled through  $z^*$ to follow a certain reference, i.e. $x\ra x^*$
\end{itemize}
\subsubsection{Inner Loop}
For the fast mode controller design, the slow modes, $x$, are assumed to be constant, i.e. $x= \bar{x}$, and thus \eqref{eq:SPfast} can be written as:
\begin{align}
\mu\dot{z} = g(\bar{x},\;z,\;u,\;d) = \tilde{g}(z,\; u,\;d)
\end{align}
where $u$ is designed by a static or dynamic controller to ensure fast regulation: $z\ra z^*$. 

\subsubsection{Outer Loop}
The outer/slow controller assumes  that the dynamics of the closed loop fast subsystem are instantaneous:
\begin{align}\label{eq:fastsp}
0 = g(x,\;z,\;u,\;d)
\end{align}
and  $u$ can be obtained from \eqref{eq:fastsp} and  written as a function of the slow and fast modes, i.e.  $u=p(x,\;z,\;d)$. The slow subsystem then becomes:
\begin{align}
\dot{x} = f(x,\;z,\;p(x,\;z,\;d),\;d) = \tilde{f}(x, \;z^*, \;d)
\end{align}
with the new input $z^*$.

\begin{figure}[!t]
	\begin{center}
		\includegraphics[width=0.5\textwidth]{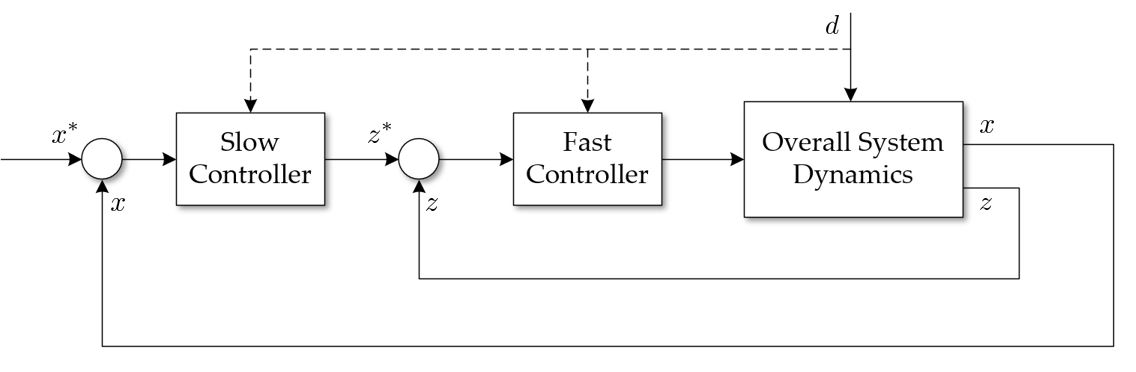}\vspace{0pt}
		\caption{Overview of the proposed controller design for singular perturbation systems.  The dashed lines depends on the type of controller used and sensor availability.} 
		\label{fig:overallctrl} 
	\end{center}
\end{figure}
Fig. \ref{fig:overallctrl} presents an overview of the proposed controller design. As can be inferred from this figure,  the closed loop dynamics for the fast subsystem $z$ need to be much faster than $x$.   Therefore, fast regulation of $z\ra z^*$ is a crucial requirement. 

\section{Fast Inner  Current Regulator}
In this section, the controller design for the fast subsystem defined by \eqref{eq:acside} is presented. Following the  procedure outlined in Section 1B, the slow mode (dc bus voltage)  is assumed to be constant, i.e. $v_{dc}=\bar{v}_{dc}$. Therefore, the inputs/modulation indices, $d_d$ and $d_q$, can  be re-written in terms of the voltages $v_d\;\txt{and}\;v_q$ respectively based on \eqref{eq:modinx}.  In this case, \eqref{eq:acside} becomes a linear state space system. 

We consider standard decoupling techniques for inverters \cite{Gao2016} by defining new inputs, $\tilde{v}_d,\;\tilde{v}_q$, as follows:
\begin{align}
\begin{array}{ll}\label{eq:ctrlmods}
v_d &= \tilde{v}_d - \omega_r L_q i_q \\
v_q &= \tilde{v}_q + \omega_r L_d i_d + \omega_r \lambda_m
\end{array}
\end{align}
Plugging \eqref{eq:ctrlmods} into \eqref{eq:acside} we can obtain the following:
\begin{align}
\dot{i}_d &= \tfrac{-R_s}{L_d}i_d +\tfrac{1}{L_d}\tilde{v}_d \label{eq:ddecoup}\\
\dot{i}_q &= \tfrac{-R_s}{L_q}i_q + \tfrac{1}{L_q}\tilde{v}_q \label{eq:qdecoup}
\end{align}
Therefore, each current controller can be designed independently. The controllers' goals are to regulate $z\ra z^*$ or $i_d\ra i_d^*$ and $i_q\ra i_q^*$.
\renewcommand{\arraystretch}{1.25}
\subsection{$D$ Axis Current Control}
For the $d$ axis current controller, we assume that the overall system, including the reference $i_d^*$ (constant), is defined by the following linear dynamics: 
\begin{align}
\begin{array}{ll}
\dot{i}_d &= \tfrac{-R_s}{L_d}i_d +\tfrac{1}{L_d}\tilde{v}_d \\
\dot{i}_d^* &= 0 \\
e_d &= -i_d + i_d^*
\end{array}\Ra
\begin{array}{ll}
\dot{x}_d &= A_d x_d + B_d u_d \\
\dot{x}_d^* &= S_d x_d^* \\
e_d &= C_dx_d + Q_dx_d^*
\end{array}
\end{align}
The controller, $u_d$, is designed using output regulation techniques \cite{Francis}:
\begin{align}\label{eq:ctrld}
 u_d = K_d\xi_d + T_dx_d^*
\end{align}
where $\xi_d$ is an estimate of $x_d$ (e.g. using a Luenberguer or Kalman filter), $K_d$ is designed such that $\sigma(A_d+B_dK_d)\subset\mb{C}^{-}$ and $T_d$ is a feed forward gain satisfying:
\begin{align}\label{eq:outregd}
\begin{array}{rl}
A_d\Pi_d+B_d(K_d\Pi_d +T_d) &= \Pi_d S_d\\
C_d\Pi_d+Q_d &= 0.
\end{array}
\end{align}

\begin{proposition}
	The control law,  $u_d = K_d\xi_d + T_dx_d^*$, satisfying \eqref{eq:outregd} and $\sigma(A_d+B_dK_d)\subset\mathbb{C}^-$ ensures $i_d\ra i_d^*.$
\end{proposition}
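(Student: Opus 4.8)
The plan is to follow the classical output regulation argument of Francis \cite{Francis}, reducing the tracking requirement to an internal stability question by moving to coordinates centered on the zero-error steady-state manifold. Since the reference is constant we in fact have $S_d = 0$, but the argument is identical for general $S_d$, so I would keep the regulator equations exactly in the form \eqref{eq:outregd}.

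First I would introduce the deviation coordinate $\tilde{x}_d \teq x_d - \Pi_d x_d^*$, measuring the gap between the plant state and the steady-state trajectory $\Pi_d x_d^*$ that the exosystem prescribes. Working momentarily with exact state feedback $\xi_d = x_d$, I would differentiate $\tilde{x}_d$ along the closed loop, insert $u_d = K_d x_d + T_d x_d^*$, and then substitute $x_d = \tilde{x}_d + \Pi_d x_d^*$. The term multiplying $x_d^*$ collects to $A_d\Pi_d + B_d\pa{K_d\Pi_d + T_d} - \Pi_d S_d$, which vanishes by the first line of \eqref{eq:outregd}. What remains is the autonomous, forcing-free dynamics $\dot{\tilde{x}}_d = \pa{A_d + B_d K_d}\tilde{x}_d$, exponentially stable by the hypothesis $\sigma(A_d + B_d K_d) \subset \mb{C}^-$; hence $\tilde{x}_d \ra 0$.

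Next I would express the tracking error through $\tilde{x}_d$. Substituting $x_d = \tilde{x}_d + \Pi_d x_d^*$ into $e_d = C_d x_d + Q_d x_d^*$ yields $e_d = C_d\tilde{x}_d + \pa{C_d\Pi_d + Q_d}x_d^*$, and the second line of \eqref{eq:outregd} annihilates the bracketed coefficient, so $e_d = C_d\tilde{x}_d$. Convergence $\tilde{x}_d \ra 0$ then forces $e_d \ra 0$; since $e_d = -i_d + i_d^*$ this is exactly $i_d \ra i_d^*$.

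The step I expect to require the most care is discharging the provisional assumption $\xi_d = x_d$, because the realized control uses the estimate $\xi_d$ rather than the true state. To close this gap I would appeal to the separation principle: with estimation error $\varepsilon_d \teq x_d - \xi_d$, the observer gain is chosen so that $\varepsilon_d$ decays exponentially on its own, and substituting $\xi_d = x_d - \varepsilon_d$ into the control turns the error dynamics into the cascade $\dot{\tilde{x}}_d = \pa{A_d + B_d K_d}\tilde{x}_d - B_d K_d\varepsilon_d$. A stable system driven by an exponentially vanishing input is itself convergent, so $\tilde{x}_d \ra 0$ persists and the conclusion $i_d \ra i_d^*$ is unchanged.
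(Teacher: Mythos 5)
Your proof is correct and is precisely the ``standard argument of output regulation theory'' that the paper invokes by citation in lieu of a written proof: the change of coordinates $\tilde{x}_d = x_d - \Pi_d x_d^*$, cancellation via the regulator equations \eqref{eq:outregd}, exponential stability from $\sigma(A_d+B_dK_d)\subset\mathbb{C}^-$, and the separation principle to account for the observer state $\xi_d$. No discrepancy with the paper's approach.
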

\noindent The proof follows standard arguments of output regulation theory \cite{Francis}.
\subsection{$Q$ Axis Current Control}
A similar procedure is followed for the $q$-axis current regulator. The overall dynamics are as follows:
\begin{align}
\begin{array}{ll}
\dot{i}_q &= \tfrac{-R_s}{L_q}i_q + \tfrac{1}{L_q}\tilde{v}_q  \\
\dot{i}_q^* &= 0 \\
e_q &= -i_q + i_q^*
\end{array}\Ra
\begin{array}{ll}
\dot{x}_q &= A_q x_q + B_q u_q \\
\dot{x}_q^* &= S_q x_q^* \\
e_q &= C_qx_d + Q_qx_q^*
\end{array}
\end{align}
and the goal is to ensure that $i_q\ra i_q^*$ or $e_q\ra 0$ as $t\ra \ty$. The controller law is $u_q$ is given in a similar form:
\begin{align}
u_q = K_q\xi_q +T_q x_q^*
\end{align}
where $\xi_q$ is an of the current $i_q$ through a linear observer. The feedback matrix $K_q$ is designed to ensure that $\sigma(A_q+B_qK_q)\subset\mb{C}^{-}$ and $T_q$ satisfies the same full information regulator equations:
\begin{align}\label{eq:outregq}
\begin{array}{rl}
A_q\Pi_q+B_q(K_q\Pi_q +T_q) &= \Pi_q S_q\\
C_q\Pi_q+Q_q &= 0.
\end{array}
\end{align}
\begin{proposition}
	The control law $u_q = K_q\xi_q+T_qx_q^*$ satisfying $\sigma(A_q+B_qK_q)\subset\mb{C}^-$ and \eqref{eq:outregq} ensures $i_q\ra i_q^*$ as $t\ra \ty$.
\end{proposition}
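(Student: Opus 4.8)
The plan is to reproduce the standard output regulation argument already invoked for Proposition 1, carried out first for the idealized full-information controller $u_q = K_q x_q + T_q x_q^*$ (i.e.\ $\xi_q = x_q$) and then lifted to the observer-based law via the separation principle. First I would pass to the regulation coordinate $\tilde{x}_q \teq x_q - \Pi_q x_q^*$, which measures the distance of the plant state from the invariant steady-state manifold $x_q = \Pi_q x_q^*$ generated by the exosystem $\dot{x}_q^* = S_q x_q^*$. Differentiating and substituting the plant equation and the control law gives
\begin{align}
\dot{\tilde{x}}_q = \pa{A_q+B_qK_q}\tilde{x}_q + \br{\pa{A_q+B_qK_q}\Pi_q + B_qT_q - \Pi_qS_q}x_q^*,
\end{align}
and the first regulator equation in \eqref{eq:outregq}, $A_q\Pi_q + B_q\pa{K_q\Pi_q + T_q} = \Pi_q S_q$, forces the bracketed coefficient of $x_q^*$ to vanish, leaving the homogeneous error dynamics $\dot{\tilde{x}}_q = \pa{A_q+B_qK_q}\tilde{x}_q$.

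Since $K_q$ is chosen so that $\sigma\pa{A_q+B_qK_q}\subset\mb{C}^-$, the matrix is Hurwitz and $\tilde{x}_q\ra 0$ exponentially, i.e.\ $x_q\ra\Pi_q x_q^*$. To convert this into a statement about the tracked output I would write the error (as $e_q = C_q x_q + Q_q x_q^*$) in the new coordinate, $e_q = C_q\tilde{x}_q + \pa{C_q\Pi_q + Q_q}x_q^*$, and invoke the second regulator equation $C_q\Pi_q + Q_q = 0$ to collapse it to $e_q = C_q\tilde{x}_q$. Thus $e_q\ra 0$, and since $e_q = -i_q + i_q^*$ this is exactly $i_q\ra i_q^*$ as $t\ra\ty$ for the full-information controller.

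The remaining---and essentially only nonroutine---step is to accommodate the realized law $u_q = K_q\xi_q + T_q x_q^*$, in which the state is replaced by its linear-observer estimate. Introducing the estimation error $\epsilon_q\teq\xi_q - x_q$, whose dynamics are governed by an observer matrix made Hurwitz by design, the closed loop in the coordinates $\pa{\tilde{x}_q,\;\epsilon_q}$ is block triangular: the $\tilde{x}_q$ subsystem picks up the forcing term $B_qK_q\epsilon_q$, while the $\epsilon_q$ subsystem is autonomous and exponentially stable. Its spectrum is the union of the controller and observer spectra, both in $\mb{C}^-$, so the separation principle (equivalently, a cascade argument: an exponentially decaying input $B_qK_q\epsilon_q$ driving an exponentially stable system) yields $\pa{\tilde{x}_q,\;\epsilon_q}\ra 0$ and hence $e_q\ra 0$. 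I expect the bookkeeping of this cascade to be the main thing to verify carefully, the core regulator computation being identical to the $d$-axis case.
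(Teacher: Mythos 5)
Your proposal is correct and is precisely the ``standard argument of output regulation theory'' that the paper invokes (by citation to Francis) without writing out: the coordinate change $\tilde{x}_q = x_q - \Pi_q x_q^*$, cancellation of the forcing term via the first regulator equation, collapse of $e_q$ to $C_q\tilde{x}_q$ via the second, and the separation-principle cascade for the observer-based law. No discrepancy with the paper's approach.
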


Lastly, the feedback matrices $K_d$ and $K_q$ should be optimized carefully in order to guarantee that fast regulation of the currents $i_d$ and $i_q$.  Semi-definite Programming (SDP) techniques are used in the case study section for the tuning of the controller gains \cite{HerreraTSG}.

\section{ NMPC Based DC Voltage Control - Slow Subsystem }
NMPC is used for the controller design of the slow subsystem, $v_\dc$, in order to provide the references $i_d^*$, $i_q^*$ to the inner current control. In this case, it is assumed that the fast dynamics are instantaneous, i.e. $\mu\ra0$ in \eqref{eq:SPfast}. Based on \eqref{eq:acside}, the left hand side is simplified as:
\begin{align}
\begin{array}{ll}
0 &= \frac{-R_s}{L_d}i_d+\omega_r \frac{L_q}{L_d}i_q + \frac{1}{L_d}v_d \\
0 &= \frac{-R_s}{L_q}i_q -\omega_r\frac{L_d}{L_q}i_d -\frac{\omega_r}{L_q}\lambda_m + \frac{1}{L_q}v_q
\end{array}
\end{align} 
and $v_d$ and $v_q$ can be obtained as:
\begin{align}\label{eq:inputsimp}
\begin{array}{ll}
v_d \s = -r_s i_d+\omega_r L_q i_q\\
v_q \s = -r_s i_q-\omega_r L_di_d-\omega_r \lambda_m
\end{array}
\end{align}
Plugging \eqref{eq:inputsimp} into the dc voltage dynamics \eqref{eq:dcside}, the following nonlinear system is derived:
\begin{align}\label{eq:vdcsimp}
\begin{array}{ll}
\s\dot{v}_{dc} = -\frac{1}{RC}v_{dc}  -\frac{1}{C}i_L \\
\s+ \frac{3}{2C}\frac{1}{v_{dc}}\pa{-r_s(i_d^2+i_q^2)+\omega_r(L_q-L_d)i_qi_d-\omega_r\lambda_mi_q}
\end{array}
\end{align}
Notice that the inputs in this case are now $i_d=i_d^*$ and $i_q=i_q^*$, i.e. the references for the inner loop controller.  In addition, the new model \eqref{eq:vdcsimp} is nonlinear due to the second degree terms in the inputs and the reciprocal of the state term ($1/v_{dc}$).  

The goal of the slow subsystem controller is to regulate the dc bus voltage to a certain reference, $v_{dc}^*$, while at the same time reducing losses and satisfying constraints associated with the voltage boundaries (e.g. see MIL-STD-704F \cite{704fstd}) and the physical limits of  the PMSM (current and voltage). 

\subsection{Optimal Operation and Constraints}
Since only active power is consumed by the dc side of a PMSG, the ac side currents should be controlled as to provide only active power whenever possible (i.e. unity power factor). The torque produced by the PMSG is defined as follows:
\begin{align}
	T_e &= \frac{3}{2}\frac{P}{2}\pa{\lambda_m i_q + (L_d-L_q)i_qi_d}\q\txt{(Nm)}
\end{align}
where $P$ is the number of poles. Therefore, the electrical power, at the ac/mechanical side, can be obtained from the previous equation using the torque/power relation:
\begin{align}
P_e = T_e\omega_m &= \tfrac{3}{2}\tfrac{P}{2}\omega_m\pa{\lambda_m i_q + (L_d-L_q)i_qi_d} \\
\Ra\;P_e &= \tfrac{3}{2}\omega_r\pa{\lambda_m i_q + (L_d-L_q)i_qi_d} \label{eq:acpower}
\end{align}
where $\omega_m$ is the rotor mechanical speed (rad/sec) and the last equation is obtained from $\omega_r = \frac{P}{2}\omega_m$.

For generation mode, the electrical power is decided only by the dc load. However, since $P_e$ is a function of both $i_d$ and $i_q$, there are multiple solutions to \eqref{eq:acpower}.  The optimal solution minimizes the rms (or peak) of the ac side currents, i.e. providing only active power whenever possible. 

The constraints for a PMSM typically involve current and voltage limits. These can be written as follows:
\begin{align}
i_d^2+i_q^2 \s\leq I_\txt{peak}^2\\
v_d^2+v_q^2 \s\leq V_\txt{peak}^2 \label{eq:voltlimits}
\end{align}
Plugging \eqref{eq:inputsimp} into \eqref{eq:voltlimits} and assuming $r_s\approx0$, we can rewrite the voltage constraints in terms of $dq$ currents:
\begin{align}\label{eq:voltineq}
\pa{\omega_rL_qi_q}^2+\pa{\omega_rL_di_d+\omega_r\lambda_m}^2\leq V_\txt{peak}^2 = \pa{\frac{v_\dc}{2}}^2
\end{align}
The equality $V_\txt{peak}^2 = \pa{\frac{v_\txt{dc}}{2}}^2$ is based on sine PWM as shown in \eqref{eq:modinx} for $d_q=d_d=1$.

\renewcommand{\arraystretch}{1.35}
Finally, the optimal operation of the PMSM based generator for a fixed dc load power, $P_e$, is a solution of the following optimization problem:
\begin{align}\renewcommand{\arraystretch}{1.5}\label{eq:optstatic}
\begin{array}{l}
	\displaystyle \min_{i_d,\;i_q} \q i_d^2+i_q^2 \\
	\txt{s.t.} \\
	\q \tfrac{3}{2}\omega_r\pa{\lambda_m i_q + (L_d-L_q)i_qi_d} = P_e \\
	\q i_d^2+i_q^2\leq I_\txt{peak}^2\\
	\q \pa{\omega_rL_qi_q}^2+\pa{\omega_rL_di_d+\omega_r\lambda_m}^2\leq  \pa{\frac{v_\dc}{2}}^2
\end{array}
\end{align}
During steady state operation, the slow side controller should satisfy \eqref{eq:optstatic}. Of particular importance are the non-trivial solutions for \eqref{eq:optstatic}, contained in the interior of the following set:
\begin{align}
	\begin{array}{ll}
		\mc{E} = \s\bigg\{(i_d,\;i_q)^T\in\mb{R}^2\;\big|  i_d^2+i_q^2\leq I_\txt{peak}^2, \\
		\s \pa{\omega_rL_qi_q}^2+\pa{\omega_rL_di_d+\omega_r\lambda_m}^2\leq  \pa{\frac{v_\dc}{2}}^2\bigg\}
	\end{array}
\end{align}
i.e. when the inequalities in \eqref{eq:optstatic} are non-binding.  For this case, it is possible to supply only active power from the generator, hence minimizing the ac currents.

\subsection{NMPC Formulation}
We consider a NMPC controller for dc bus voltage regulation and optimal operation of the PMSG. To ensure convergence to the desired reference voltage, we expand \eqref{eq:vdcsimp} by an integral term as follows:
\begin{align}\label{eq:overalldc}
\begin{array}{l}
\begin{array}{ll}
\s\dot{v}_{dc} = -\frac{1}{RC}v_{dc}  -\frac{1}{C}i_L \\
\s+ \frac{3}{2C}\frac{1}{v_{dc}}\pa{-r_s(i_d^2+i_q^2)+\omega_r(L_q-L_d)i_qi_d-\omega_r\lambda_mi_q}
\end{array}\\
 \;\;\dot{e}_\txt{int} = -v_\txt{dc} + v_\dc^*
\end{array}
\end{align}
For simplicity, \eqref{eq:overalldc} is written as the  nonlinear system:
\begin{align}
\dot{x} &= f_c\pa{x,\;u,\;d}
\end{align}
where $x = \pa{v_\dc,\;e_\txt{int}}^T$, $u=\pa{i_d,\;i_q}^T$, $d=i_L$.  

The extended nonlinear system \eqref{eq:overalldc} is then discretized at a certain time step $T_s$:
\begin{align}
	x_{k+1} = f_d(x_k,\;u_k,\;d_k)
\end{align}
using Forward Euler (FE). The NMPC can now be formally stated:
\begin{align}\label{eq:NMPCprob}
\begin{array}{l}
	\displaystyle \min_{x_k,\;u_k}\q \sum_{k=0}^{N-1} (x_k-x_\txt{ref})^TQ(x_k-x_\txt{ref}) + u_k^T R u_k+ \\
	\q\q\q (x_N-x_\txt{ref})Q(x_N-x_\txt{ref})  \\
	\txt{s.t.} \\
	\left\{\begin{array}{l}
		\q x_{k+1} = f_d(x_k,\;u_k,\;d_k)  \\
		\q ||u_k||_2^2\leq I_\txt{peak}^2 \q\txt{for }k=0,\;...,\;N-1\\ 
		\q \pa{\omega_rL_qu_{2,k}}^2+\pa{\omega_rL_du_{1,k}+\omega_r\lambda_m}^2\leq  \pa{\dfrac{x_{1,k}}{2}}^2  
	\end{array}\right.\\
	\q V_\txt{dc-min}\leq x_{1,k}\leq V_\txt{dc-max} \q \txt{for } k= 1,\;...,\;N
	\end{array}
\end{align}
where $N$ is the prediction horizon, $x_\txt{ref} = \pa{v_\dc^*,\;0}^T$, and $Q,\;R\succ 0$. The main advantage of using the proposed NMPC is that under certain conditions, the optimal solution to \eqref{eq:NMPCprob} satisfies \eqref{eq:optstatic} during steady state, as shown in the following proposition.
\begin{proposition}\label{eq:thm1}
	Assume $Q=\txt{Diag}\pa{\gamma,\;\beta}$ and $R = \gamma I$, where $\epsilon,\;\gamma,\;\beta$ are positive constants. Let $P_e \teq -\pa{v_{dc}^2/R+v_{dc}i_{L}}$, $r_s = 0$, and perfect tracking is achieved, i.e. $x_{1,k} = v_\dc^*$ for $k$ greater than a certain $M$.
	
	During steady state ($x_{k+1} = x_k$), assume the optimal solution to \eqref{eq:NMPCprob} is as follows: $X^* = x_k^*\otimes\mbf{1}_N^T$ and $U^*\teq u_k^*\otimes\mbf{1}_N^T$, where $N$ is the horizon. Then $u_k^*$ is also a solution to \eqref{eq:optstatic}. 
\end{proposition}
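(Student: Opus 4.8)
The plan is to impose the steady-state ansatz and perfect-tracking hypotheses on \eqref{eq:NMPCprob} and show that both its objective and its surviving constraints collapse onto the static program \eqref{eq:optstatic}. First I would substitute the constant trajectory $X^*=x_k^*\otimes\mbf{1}_N^T$, $U^*=u_k^*\otimes\mbf{1}_N^T$ into the cost. With perfect tracking $x_{1,k}=v_\dc^*$ we have $x_k-x_\txt{ref}=\pa{0,\;e_\txt{int}^*}^T$, so that with $Q=\txt{Diag}\pa{\gamma,\;\beta}$ and $R=\gamma I$ the stage term $\pa{x_k-x_\txt{ref}}^TQ\pa{x_k-x_\txt{ref}}=\beta\,(e_\txt{int}^*)^2$ while $u_k^TRu_k=\gamma\,\|u_k^*\|_2^2$. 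Summing the $N$ stages and the terminal term, the objective becomes $N\gamma\,\|u_k^*\|_2^2+(N+1)\beta\,(e_\txt{int}^*)^2$, whose only input-dependent part is $N\gamma\,(i_d^2+i_q^2)$ with $u_k^*=\pa{i_d,\;i_q}^T$ and $\gamma>0$.

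Next I would convert the dynamics constraint into the power-balance equality of \eqref{eq:optstatic}. Imposing $x_{k+1}=x_k$ in the forward-Euler discretization of \eqref{eq:overalldc} forces the right-hand side of the continuous model to vanish. The $e_\txt{int}$ row returns $0=-v_\dc+v_\dc^*$, reproducing the tracking hypothesis, while the $v_\dc$ row, multiplied through by $Cv_\dc$ with $r_s=0$, gives
\begin{align}
0=-\tfrac{1}{R}v_\dc^2-v_\dc i_L+\tfrac{3}{2}\om_r\pa{(L_q-L_d)i_qi_d-\lambda_m i_q}.
\end{align}
Substituting $P_e\teq-\pa{v_\dc^2/R+v_\dc i_L}$ and negating, this is exactly $\tfrac{3}{2}\om_r\pa{\lambda_m i_q+(L_d-L_q)i_qi_d}=P_e$. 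The two inequalities of \eqref{eq:NMPCprob}, evaluated on the constant solution with $x_{1,k}=v_\dc^*$, reproduce verbatim the current limit $i_d^2+i_q^2\le I_\txt{peak}^2$ and the voltage limit $\pa{\om_rL_qi_q}^2+\pa{\om_rL_di_d+\om_r\lambda_m}^2\le\pa{v_\dc^*/2}^2$ of \eqref{eq:optstatic}, while the box constraint $V_\txt{dc-min}\le x_{1,k}\le V_\txt{dc-max}$ degenerates to the input-independent (and, by feasibility of the reference, satisfied) condition $V_\txt{dc-min}\le v_\dc^*\le V_\txt{dc-max}$.

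The remaining step, which I expect to be the only delicate point, is to dispose of the integral-state penalty $(N+1)\beta\,(e_\txt{int}^*)^2$. The key observation is that $e_\txt{int}$ is fully decoupled from the decision $u_k^*$: it enters neither the $v_\dc$ dynamics nor any constraint, and on a constant trajectory it is pinned to its given initial value, so its contribution to the cost is a constant independent of $u_k^*$. Consequently minimizing the NMPC objective over $u_k^*$ is equivalent to minimizing $i_d^2+i_q^2$ subject to precisely the equality and inequality constraints derived above, which are the data of \eqref{eq:optstatic}. Since $N\gamma>0$ merely rescales the objective and the feasible sets coincide, any steady-state optimizer $u_k^*$ of \eqref{eq:NMPCprob} is also a minimizer of \eqref{eq:optstatic}, establishing the claim. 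Everything outside the decoupling argument is the substitution above together with the single line of algebra turning the discretized $v_\dc$ dynamics into the power balance.
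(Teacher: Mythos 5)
Your proof is correct and follows essentially the same route as the paper's: impose the steady-state forward-Euler condition to recover the power-balance equality of \eqref{eq:optstatic}, then observe that with $R=\gamma I$ the input-dependent part of the cost reduces to $\gamma\|u_k\|_2^2$ so the two programs share objective and feasible set. The only difference is that you explicitly dispose of the integral-state penalty $\beta\,(e_\txt{int})^2$ via the decoupling argument, a detail the paper's proof passes over silently; this is a welcome tightening rather than a different approach.
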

\begin{proof}
	During steady state, the optimal solution of the MPC problem satisfies:
	\begin{align}
		x_k^*= x_k^*+T_sf_c(x_k^*, \;u_k^*,\;d_k)\;\;\txt{(using FE)}
	\end{align}
	Using \eqref{eq:overalldc}, the previous equation simplifies to:
	\begin{align}\label{eq:ssproof}
		-\pa{\frac{v_{dc}}{R}+i_{L}}  = \frac{3}{2}\frac{1}{v_{dc}}\pa{\omega_r(L_d-L_q)i_qi_d+\omega_r\lambda_mi_q}
	\end{align}
	Multiplying both sides of \eqref{eq:ssproof} by $v_{dc}$ we obtain:
	\begin{align}
		P_e= -\pa{\frac{v_{dc}^2}{R}+v_{dc}i_{L}}  = \frac{3}{2}\omega_r\pa{\lambda_mi_q+(L_d-L_q)i_qi_d}
	\end{align}
	Therefore,  the same equality constraint of \eqref{eq:optstatic} is obtained by the previous equation. Lastly, since $R = \gamma I$ implies that $u_k^TRu_k = \gamma||u_k||_2^2$, during steady state the cost function (besides $\gamma$) and constraints of \eqref{eq:NMPCprob} are equivalent to \eqref{eq:optstatic}. Therefore, the solution $u_k^*$ for \eqref{eq:NMPCprob} during steady state is also a solution to \eqref{eq:optstatic}.
\end{proof}

\begin{table}[!b]\footnotesize
	\renewcommand{\arraystretch}{1.5}\vspace{10pt}
	\caption{PMSM parameters based on the BMW i3 motor/generator \cite{Dajaku, ozpineci2016oak}.}
	\label{tab:machpars}
	\centering
	\begin{tabular}{l c l c l c }
		\hline \hline
		$L_d$ 			   		& 0.090 mH 		& $L_q$ 		& 0.255 mH 	&$ \lambda_m$ 	   		& 0.0385 Vs 	\\ \hline
		$r_s$ 		& 5.3 m$\Omega$  &	$n_\txt{max}$ 	   		& 11400 rpm 	& Poles 		& 12 			\\ \hline
		$T_\txt{max}$ 	   		& 250 Nm 		& $P_\txt{max}$ & 125 kW 	& $I_\txt{phase-peak}$ 	& 400 A				\\ 
		\hline \hline
	\end{tabular}
\end{table}

\begin{figure}[!b]
	\begin{center}
		\includegraphics[width=0.45\textwidth]{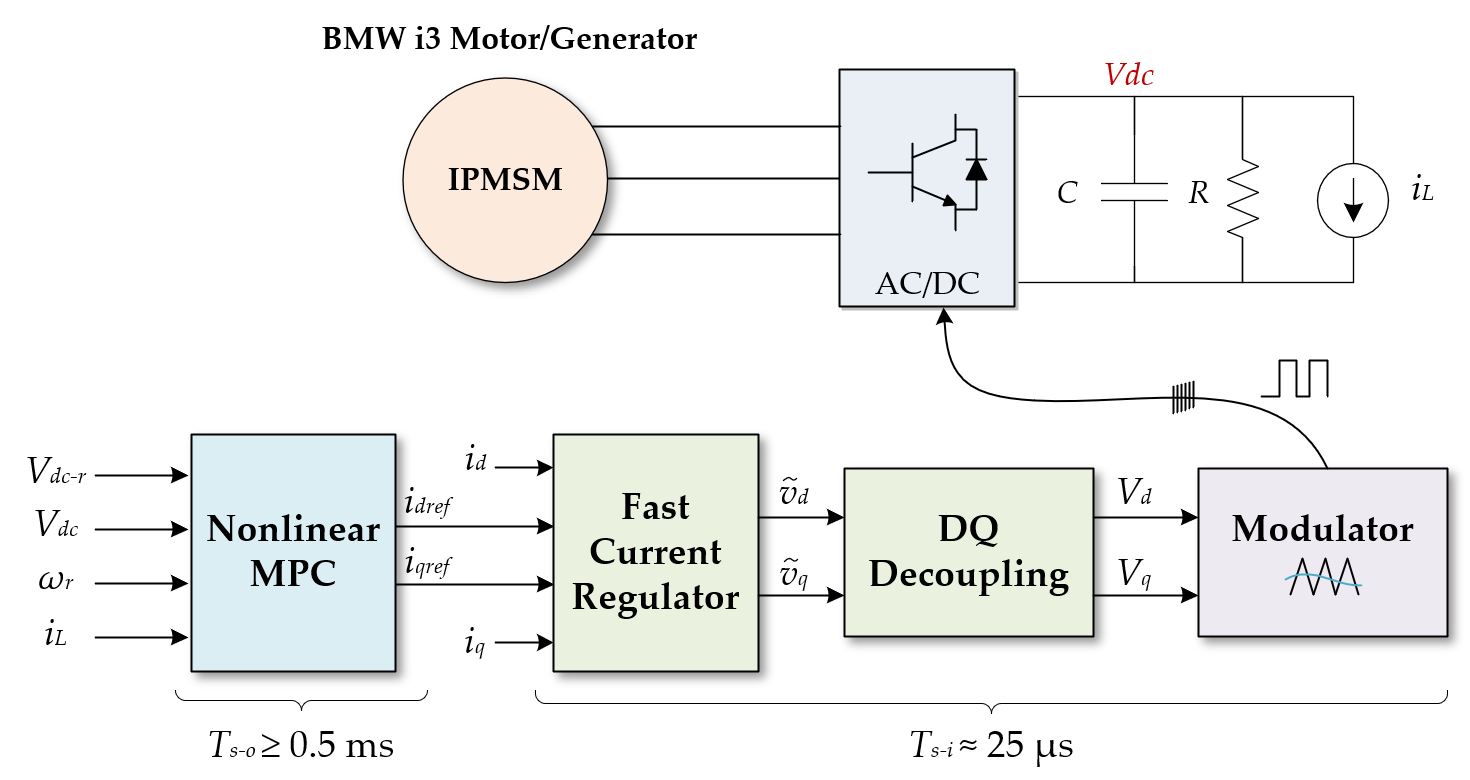}\vspace{0pt}
		\caption{Proposed control strategy for a IPMSM based generator system in dc microgrids. } 
		\label{fig:specificctrl} 
	\end{center}
\end{figure}

The proposed control strategy not only  dynamically regulates $v_\txt{dc}$ to the reference voltage, but also optimizes the steady state based on \eqref{eq:optstatic}.  During high speed operation, it may not always be possible to be in the interior of $\mc{E}$ and flux weakening is implicitly achieved by ensuring the current and voltage limits in $\mc{E}$ are satisfied.  

\renewcommand{\meas}{.45}
\begin{figure}[!t]
	\centering
	\subfloat[$DQ$ currents (top) and dc bus voltage (bottom). The dashed lines represent the references.]{\label{fig:Dq_case1}\includegraphics[width=\meas\textwidth]{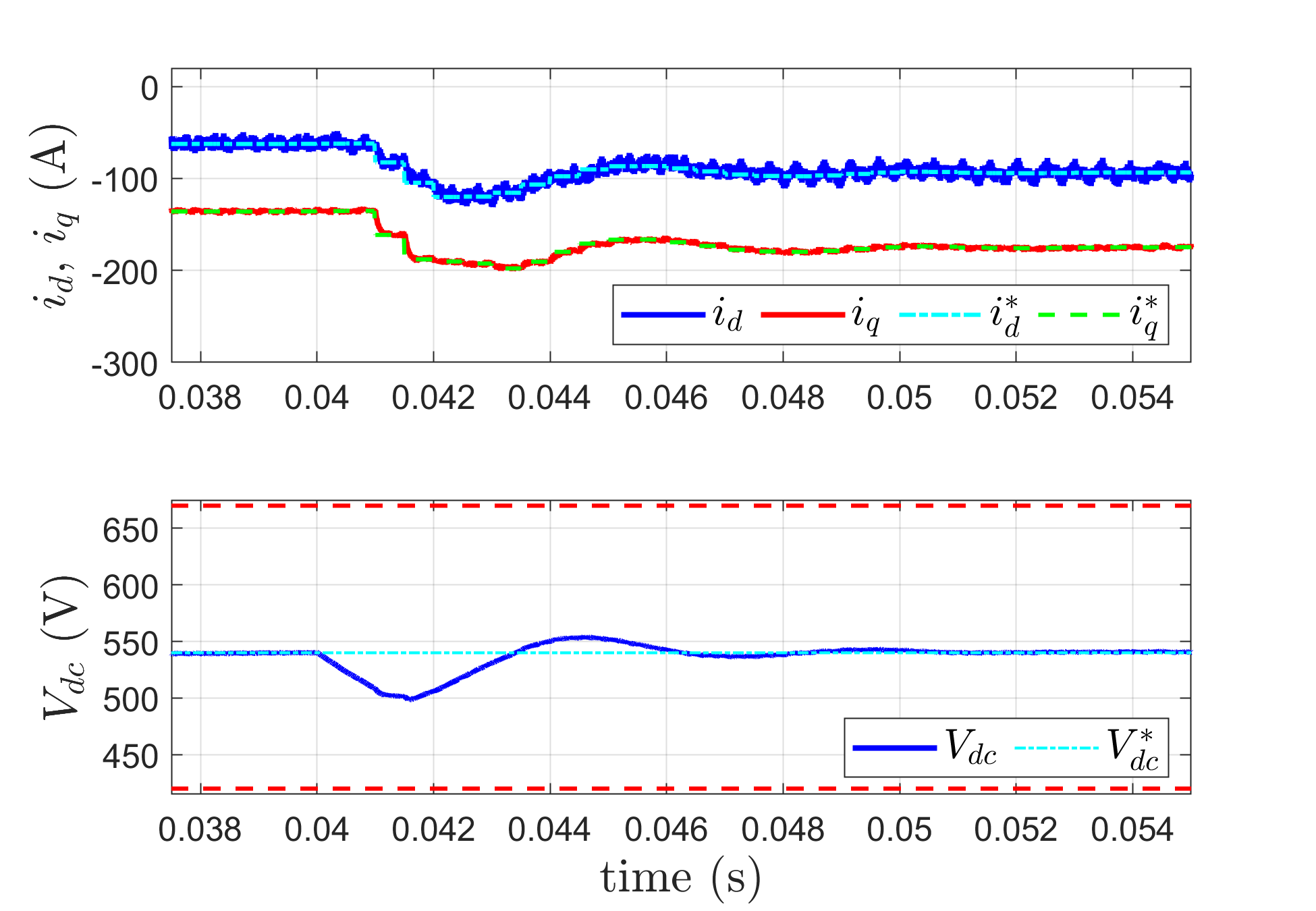}} \\
	\subfloat[Three phase currents (top) and modulation indices in $abc$ form (bottom).]{\label{fig:Iabc_case1}\includegraphics[width=\meas\textwidth]{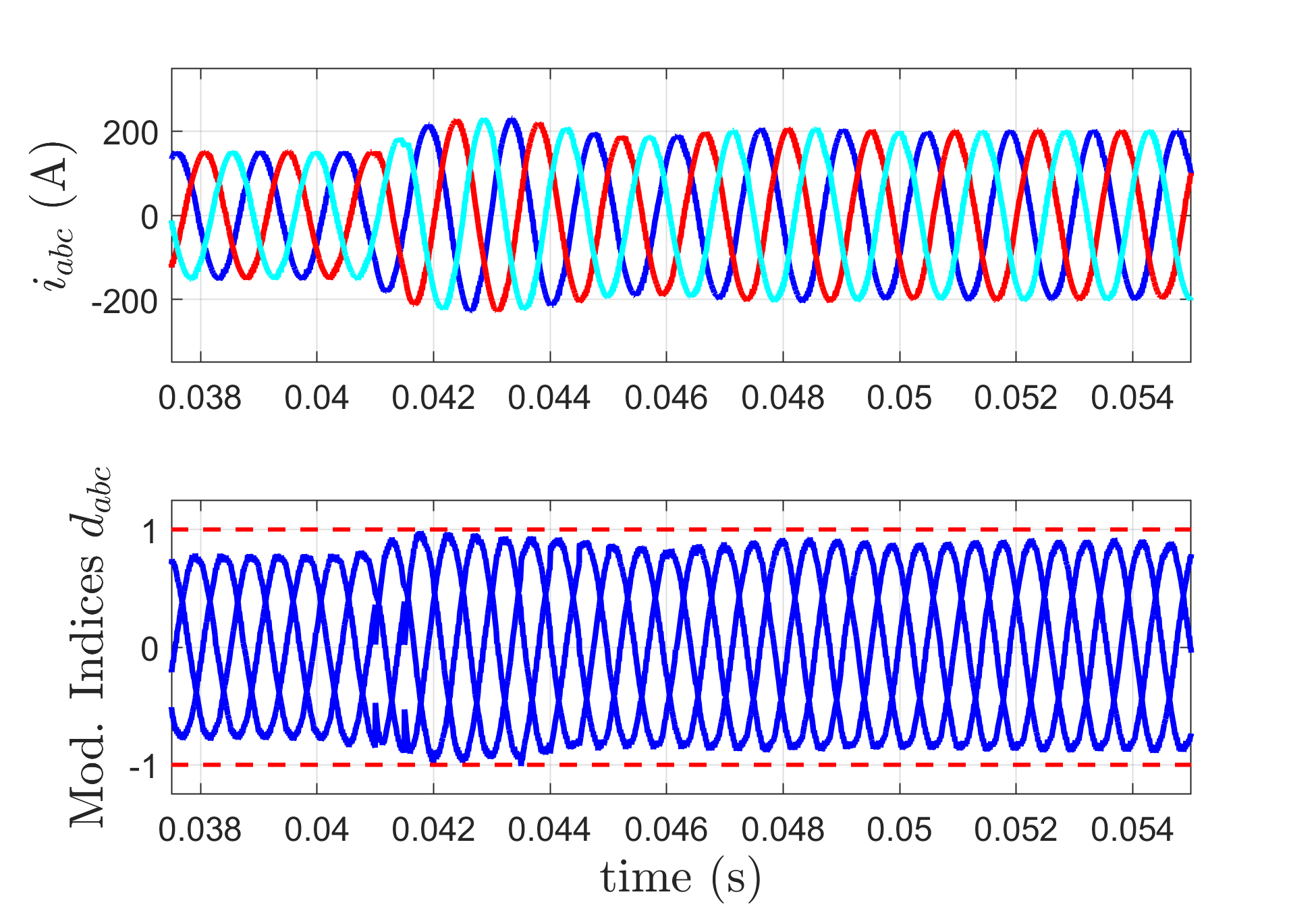}}
	\caption{Simulation results for case 1. A load change from 43.5 kW to 62.25 kW occurs at $t=0.04$ s.}\label{fig:case1res}
\end{figure}

\begin{table}[!t]\footnotesize
	\renewcommand{\arraystretch}{1.5}\vspace{10pt}
	\caption{Control parameters for active rectification of an IPMSM based generator.}
	\label{tab:ctrlpars}
	\centering
	\begin{tabular}{l c l c}
		\hline \hline
		$F_{\sw}$ (switching freq.)	& 40 kHz 			& $T_{s-o}$ 		& 0.5 ms 		\\ \hline
		$T_{s-i}$   	   				& $25\;\mu\txt{s}$ 	& $V_\txt{dc-min}$ 	& 420 V			\\ \hline
		$V_\txt{dc-max}$ 	   			& 670 V 			& N     			& 10 			\\ \hline
		$Q$ 			   				& $\txt{Diag}\pa{0.1,\;9000}$     	& $R$ 	& $0.1I_{2\times 2}$	\\
		\hline \hline
	\end{tabular}
\end{table}

\section{Case Study and Simulation Results}
We consider the parameters for the IPMSM shown in Tab. \ref{tab:machpars}.  These machine parameters are based on the BMW i3 motor/generator \cite{Dajaku, ozpineci2016oak}.  The voltage reference is set to 540 V with a maximum load of 125 kW. The NMPC discretization rate is at least $T_{s-o} \geq 0.5$ ms while the inner loop current regulator sampling time is $T_{s-i} = 25\;\mu\txt{s}$ (corresponding to a $F_\sw=40$ kHz switching frequency). The overall control strategy is shown in Fig. \ref{fig:specificctrl}. As can be seen in this figure, the NMPC is the outer  control associated with the slow subsystem ($v_{\dc}$), with inputs as the reference $dq$ currents to be used in the fast current regulator.  The $dq$ decoupling block is based on equations \eqref{eq:ddecoup} and \eqref{eq:qdecoup}. Finally, the modulator uses \eqref{eq:modinx} to compute the modulation indices for sine PWM.  The control parameters are summarized in Tab. \ref{tab:ctrlpars}.

\begin{figure}[!t]
	\centering
	\subfloat[$DQ$ currents (top) and dc bus voltage (bottom). The dashed lines represent the references.]{\label{fig:Dq_case2}\includegraphics[width=\meas\textwidth]{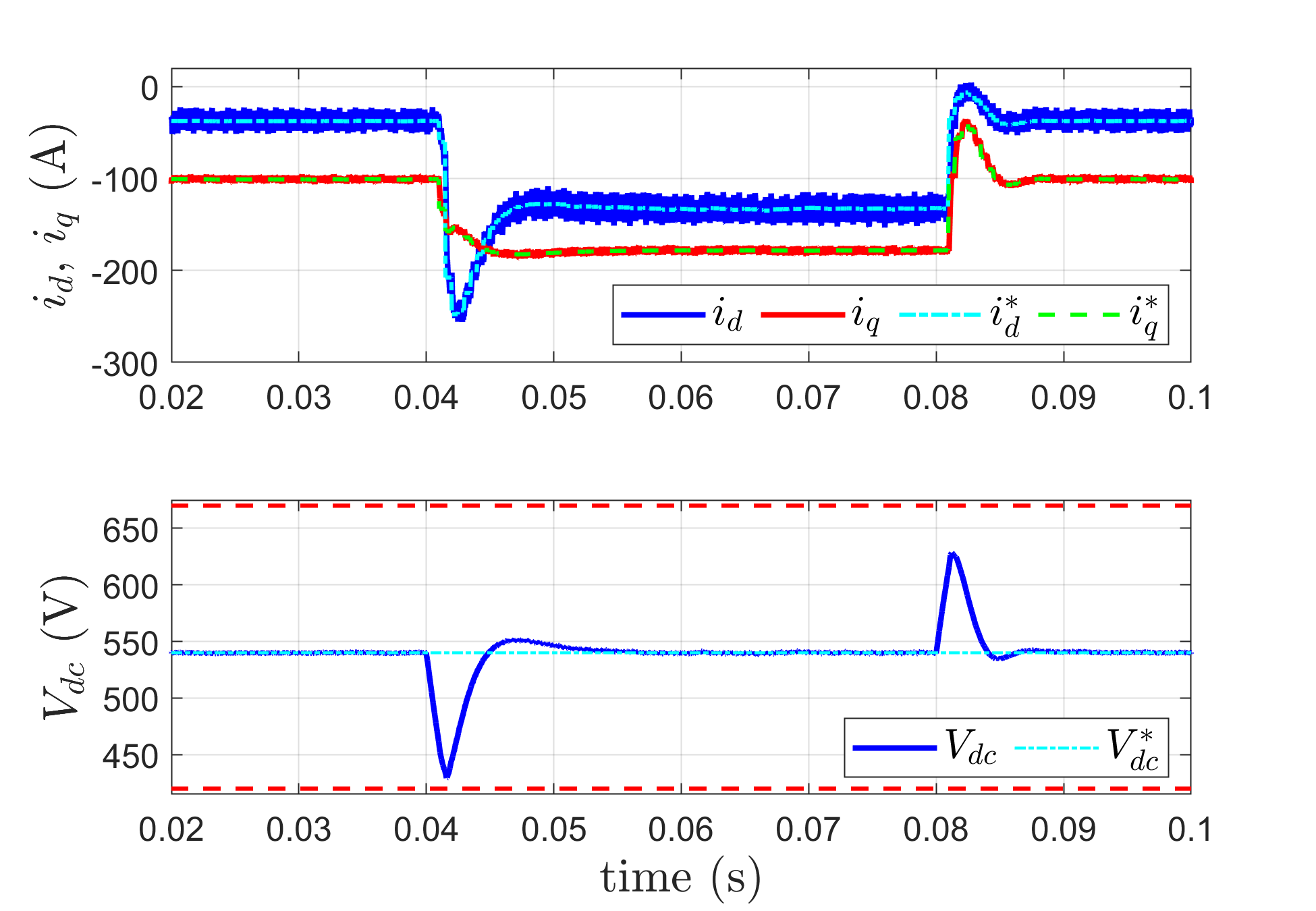}}\\
	\subfloat[Three phase currents (top) and modulation indices in $abc$ form (bottom).]{\label{fig:Iabc_case2}\includegraphics[width=\meas\textwidth]{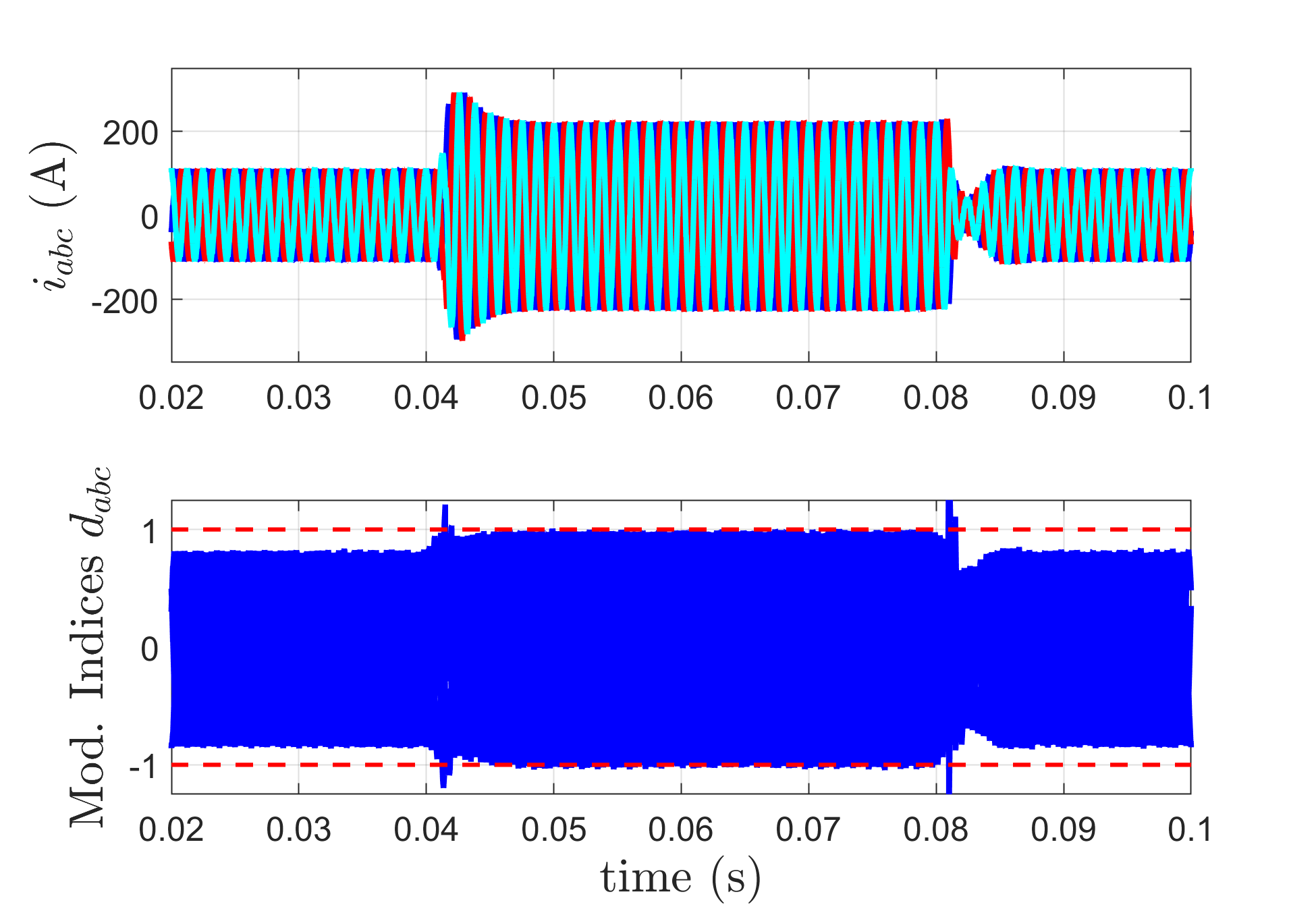}}
	\caption{Simulation results for case 2. A pulsed load occurs at $t=0.04$ s (on) and $t=0.08$ s (off). The load changes from $34$ kW to $81$ kW.}\label{fig:case2res}
\end{figure}

\subsection{Case 1}
We first consider the parameters in Tab. \ref{tab:ctrlpars} with a dc load change from 43.5 kW to 62.25 kW at $t = 0.04$ s.  The mechanical speed of the machine is $n=7000$ rpm. The optimal currents at 43.5 kW  can be solved using \eqref{eq:optstatic} as $i_\txt{d-opt} = -62$ A and $i_\txt{q-opt}= -135.3$, while at 62.25 kW are  $i_\txt{d-opt} = -93.5$ A and $i_\txt{q-opt}= -174.9$ A. The steady state values for the $dq$ currents are optimal for both of these power levels, as can be seen from Fig. \ref{fig:Dq_case1}. In addition, the reference currents are tracked accurately and much faster than the NMPC sampling time. The dc bus voltage is regulated  within 10 ms and is maintained within the bounds (dashed red). 

Fig \ref{fig:Iabc_case1} shows the phase currents and the modulation indices (abc).  Both of these can be obtained using the inverse Park transformation:
\begin{align}
	I_\txt{abc} &= K^{-1}\pa{i_d,\;i_q,\;0}^T \;\txt{and}\;
	d_\txt{abc} = K^{-1}\pa{d_d,\;d_q,\;0}^T
\end{align}
where $K$ is defined in the appendix. As mentioned previously, $d_d,\;d_q\in\br{-1,1}$ for sine PWM, which is implicitly enforced through \eqref{eq:voltineq}.

\subsection{Case 2}
Next, we consider a pulsed load change from 34 kW to 81 kW at $t = 0.04$ s and $t = 0.08$ s (on/off respectively). The mechanical speed in this case is $n=8000$ rpm. The control parameters are the same as the previous case. Fig. \ref{fig:Dq_case2} shows the $dq$ currents and the dc bus voltage. It can be seen that the voltage is kept within its limits and converges to the reference of $540$ V. Fig. \ref{fig:Iabc_case2} shows the phase currents and the modulation signals. It can be seen that when the load is set to $81$ kW, the modulation indices reach their limit of $\pm 1$. This implies that inequality \eqref{eq:voltineq} is binding at this load power.  In this mode of operation, more $i_d$ current is added to reduce the effect of the permanent magnet flux linkage and its induced back emf. 

\section{Conclusion and Future Work}
A controller design is presented for PMSG in dc microgrids. The proposed method is analyzed using similar assumptions of singular perturbation theory.  The inner loop controller for the ac currents is developed using output regulation while the outer loop control for the dc bus voltage tracking is based on NMPC.  It is shown that the NMPC is able to track the dc bus voltage accurately and minimize the peak ac currents, increasing efficiency.  Simulation results are presented using parameters for the BMW i3 IPMSM. Future work includes full hardware testing of the proposed controller and stability analysis of the proposed techniques. 

\section{Acknowledgement}
This research was supported by the AFRL Summer Faculty Fellowship Program (SFFP). Distribution A: approved for public release, distribution unlimited. Case Number: 88ABW-2020-2970. 

\appendix
The abc to dq transformation used in the derivation of \eqref{eq:dcside} and \eqref{eq:acside} is the following:
\begin{align}
	K = \frac{2}{3}\pmt{\cos(\theta_r) \s \cos(\theta_r-2\pi/3) \s \cos(\theta_r+2\pi/3) \\
										-\sin(\theta_r) \s -\sin(\theta_r-2\pi/3) \s -\sin(\theta_r+2\pi/3) \\
									    \frac{1}{2} \s \frac{1}{2} \s \frac{1}{2}} 
\end{align}

\bibliographystyle{abbrv}
\bibliography{IEEEabrv,bibfile}
\par
\leavevmode

\end{document}